
\documentclass[letterpaper, 10 pt, conference]{ieeeconf}  

\IEEEoverridecommandlockouts                              

\overrideIEEEmargins                                      




\usepackage{graphics} 
\usepackage{amsmath,amssymb,amsthm} 
\usepackage{mathrsfs}
\usepackage[noadjust]{cite}
\usepackage{enumitem}

\newtheorem{thm}{Theorem}
\newtheorem{lem}[thm]{Lemma}

\theoremstyle{definition}
\newtheorem{defn}[thm]{Definition}

\theoremstyle{remark}
\newtheorem{rem}[thm]{Remark}

\usepackage{amsmath}
\usepackage{tikz}
\usepackage{xspace}
\usepackage{lipsum}

\DeclareRobustCommand{\legendline}[1]{\hspace{-3pt}
\tikz[#1,line width=0.4mm,baseline=-0.5ex]{\draw (0,0) -- (.35,0);}
\hspace{-3pt}}

\definecolor{mblue}{rgb}{0,0.4470,0.7410}
\definecolor{morange}{rgb}{0.8500,0.3250,0.0980}
\definecolor{myellow}{rgb}{0.9290,0.6940,0.1250}
\definecolor{mpurple}{rgb}{0.4940,0.1840,0.5560}
\definecolor{mgreen}{rgb}{0.4660,0.6740,0.1880}
\definecolor{mcyan}{rgb}{0.3010,0.7450,0.9330}
\definecolor{mred}{rgb}{0.6350,0.0780,0.1840}
\definecolor{mgreenblue}{rgb}{0.0,1.0,0.5}
\definecolor{parulablue}{rgb}{0.2431,0.1490,0.6588}
\definecolor{parulalblue}{RGB}{39,151,235}
\definecolor{parulagreen}{RGB}{129,204,89}
\definecolor{parulayellow}{RGB}{249,251,21}


\newcommand{\norm}[1]{\left\lVert#1\right\rVert}

\newcommand{\dlitwo}{\ensuremath{\ell_{\mathrm{i}2}}\xspace}
\newcommand{\dltwo}{\ensuremath{\ell_2}\xspace}
\newcommand{\m}[1]{\mathcal{#1}}
\newcommand{\mr}[1]{\mathrm{#1}}

\newcommand{\htwo}{\ensuremath{\mathcal{H}_2}\xspace}
\newcommand{\Partial}[3][]{\frac{\partial^{#1} #2}{\partial #3^{#1}}}

\DeclareMathOperator{\sinc}{sinc}
\DeclareMathOperator{\col}{col}

\DeclareMathAlphabet{\mathsfit}{T1}{\sfdefault}{\mddefault}{\sldefault}
\SetMathAlphabet{\mathsfit}{bold}{T1}{\sfdefault}{\bfdefault}{\sldefault}

\makeatletter
\g@addto@macro\normalsize{%
  \setlength\abovedisplayskip{.3em}
  \setlength\belowdisplayskip{\abovedisplayskip}
  \setlength\abovedisplayshortskip{.3em}
  \setlength\belowdisplayshortskip{\abovedisplayskip}
}

\title{\LARGE \bf
Incremental Dissipativity based Control of Discrete-Time\\ Nonlinear Systems via the LPV Framework
}

\author{Patrick J.W. Koelewijn$^{1}$, Roland T\'oth$^{1,2}$ and Siep Weiland$^1$
\thanks{This work has received funding from the European Research Council (ERC) under the European Union’s Horizon 2020 research and innovation programme (grant agreement nr. 714663).}
\thanks{$^{1}$Patrick J.W. Koelewijn, Roland T\'oth and Siep Weiland are with Control Systems Group, Faculty of  Electrical Engineering, Eindhoven University of Technology, 5600 MB Eindhoven, The Netherlands
        {\tt\small $\lbrace$p.j.w.koelewijn, r.toth, s.weiland$\rbrace$@tue.nl}}%
\thanks{$^{2}$Roland T\'oth is also affiliated with the Systems and Control Laboratory, Institute for Computer Science and Control, Kende u. 13-17, H-1111 Budapest, Hungary
        {\tt\small toth.roland@sztaki.hu}}%
}

\begin{document}

\maketitle
\thispagestyle{empty}
\pagestyle{empty}

\begin{abstract}
Unlike for Linear Time-Invariant (LTI) systems, for nonlinear systems, there exists no general framework for systematic convex controller design which incorporates performance shaping. The Linear Parameter-Varying (LPV) framework sought to bridge this gap by extending convex LTI synthesis results such that they could be applied to nonlinear systems. However, recent literature has shown that naive application of the LPV framework can fail to guarantee the desired asymptotic stability guarantees for nonlinear systems. Incremental dissipativity theory has been successfully used in the literature to overcome these issues for Continuous-Time (CT) systems. However, so far no solution has been proposed for output-feedback based incremental control for the Discrete-Time (DT) case. Using recent results on convex analysis of incremental dissipativity for DT nonlinear systems, in this paper, we propose a convex output-feedback controller synthesis method to ensure closed-loop incremental dissipativity of DT nonlinear systems via the LPV framework. The proposed method is applied on a simulation example, demonstrating improved stability and performance properties compared to a standard LPV controller design.
\end{abstract}

\section{Introduction}
Control of nonlinear systems has been in focus of intensive research over the last few decades. However, unlike for the control of \emph{Linear Time-Invariant} (LTI) systems, there exists no systematic framework for general nonlinear systems, which incorporates performance shaping into the synthesis procedure. The \emph{Linear Parameter-Varying} (LPV) framework \cite{Shamma1988,Toth2010} aimed at bridging this gap by extending the systematic LTI controller synthesis tools to LPV systems. By embedding the nonlinear system in an LPV representation, the convex synthesis tools of the LPV framework could be used to efficiently synthesize nonlinear controllers and systematically incorporate performance shaping \cite{Wu2001}. 
However, recent research has shown that naively applying the tools of the LPV framework to nonlinear systems may fail to provide the desired performance and stability guarantees \cite{Koelewijn2020,Scorletti2015}. Namely, using the LVP framework, only asymptotic stability of the origin of the nonlinear system can be guaranteed \cite{Koelewijn2020}. However, for reference tracking and disturbance rejection, convergence to a desired steady-state trajectory is required, which the standard LPV tools cannot always guarantee. 

Hence, the use of equilibrium independent stability and performance notions are required for systematic control synthesis and analysis for nonlinear systems. Contraction \cite{Lohmiller1998}, incremental stability \cite{Angeli2002} and convergence \cite{Pavlov2006} are such equilibrium independent stability notions. Extensions of dissipativity \cite{Willems1972}, which allows for the simultaneous characterization of stability and performance, have also been made, resulting in incremental dissipativity \cite{Verhoek2020} and differential dissipativity \cite{Forni2013a} notions. In this setting, an appropriate metric of storage, e.g. energy, between trajectories or the variation of storage along the trajectories is analyzed, opposed to the standard dissipativity notions where storage, e.g. energy, is only considered with respect to a single point of neutral storage. Hence, incremental and differential dissipativity allows for equilibrium independent analysis of stability and performance whereas standard dissipativity does not. The incremental and differential stability and dissipativity results have also been extended to convex controller synthesis procedures for \emph{Continuous-Time} (CT) nonlinear systems, see \cite{Manchester2018,Scorletti2015,Koelewijn2019a,Wang2020}. Due to the equilibrium independent stability properties these synthesis results truly allow for the systematic analysis and controller design for nonlinear systems, where the LPV framework is used as a tool in order to convexify the corresponding optimization problems. 

As mentioned, most of these works have focused on analysis and control algorithms for CT nonlinear systems. For \emph{Discrete-Time} (DT) nonlinear systems, the literature is less extensive, although DT systems are widely used in system identification, model predictive control and embedded control. Even though incremental and differential stability and dissipativity results have been extended to DT systems, see \cite{Tran2018,Koelewijn2021a}, and results for state-feedback design \cite{Wei2021} and model predictive control \cite{Koehler2020} have been derived, comprehensive results for DT optimal-gain output-feedback synthesis remains an open problem. Hence, in this paper our main contribution is to develop an incremental dissipativity based convex output-feedback controller synthesis method for DT nonlinear systems, making use of the LPV framework, based on an extension of the CT results in \cite{Koelewijn2020b}. 

The paper is structured as follows. In Section \ref{sec:probdef}, a formal problem definition of the controller synthesis problem is given. In Section \ref{sec:synthesis}, the proposed controller synthesis method is described. Section \ref{sec:example} demonstrates the improved stability and performance properties of the proposed controller design compared to standard LPV control design through a simulation example. Lastly, in Section \ref{sec:conclusion}, conclusions and future recommendations are given.

\subsection{Notation}\vspace{-.5em}
The set of natural numbers including zero is denoted by $\mathbb{N}$. The set of real numbers is denoted by $\mathbb{R}$. 
The space of square-summable real valued sequences $\mathbb{N}\rightarrow\mathbb{R}$ is denoted by $\ell_2$, with the norm $\norm{x}_2=\sqrt{\sum_{k=0}^\infty\norm{x_{k}}^2}$, where $\norm{\cdot}$ denotes the Euclidian (vector) norm. A function $f$ is of class $\m{C}_n$, i.e. $f\in\m{C}_n$, if it is $n$-times continuously differentiable. A function $\alpha: \mathbb{R}_+ \rightarrow \mathbb{R}_+$ is of class $\mathcal{K}$, if it is continuous (class $\mathcal{C}$) and strictly increasing with $\alpha(0)=0$. A function $\beta: \mathbb{R}_+ \times \mathbb{N} \rightarrow \mathbb{R}_+$ is of class $\mathcal{KL}$, if for any fixed $k \geq 0$, $\beta(\cdot,k) \in \mathcal{K}$ and for any fixed $s\geq 0$, $\beta(s,\cdot)$ is decreasing with $\lim_{k\rightarrow \infty}\beta(s,k)= 0$. The column vector $\begin{bmatrix}x_1^\top &\cdots &x_n^\top\end{bmatrix}^\top$ is denoted as $\col(x_1,\dots,x_n)$. The notation $A\succ 0$ ($A\succeq 0$) indicates that $A$ is positive (semi-)definite while $A\prec 0$ ($A\preceq 0$) means that $A$ is negative (semi-)definite. The set of positive definite matrices is denoted by $\mathbb{S}_+$. 
The term that makes a matrix expression symmetric is denoted by $(\star)$, e.g. $AX+(\star)=AX+X^\top A^\top$. 
The discrete time-shift operator given by $q$ is such that for a sequence $x$, $q x_k = x_{k+1}$.

\section{Problem Definition}\label{sec:probdef}
Consider a DT nonlinear system of the form
\begin{subequations}\label{eq:nl}
\begin{align}
	x_{k+1} &= f(x_{k},w_{k});\\
	z_{k} &= h(x_{k},w_{k});
\end{align}
\end{subequations}
where $x_{k} \in \m{X} \subseteq \mathbb{R}^{n_\mr{x}}$ is the state with initial condition $x_{k=0}=x_0\in\m{X}$, $w_{k} \in \m{W} \subseteq \mathbb{R}^{n_\mr{w}}$ is the generalized disturbance, $z_{k} \in \m{Z} \subseteq \mathbb{R}^{n_\mr{z}}$ the generalized performance and $k \in \mathbb{N}$ is the discrete-time instant. The sets $\m{X}$, $\m{W}$ and $\m{Z}$ are open and convex, containing the origin. The solutions of \eqref{eq:nl} satisfy \eqref{eq:nl} in the ordinary sense and are restricted to $k\in\mathbb{N}$. The functions $f:\m{X}\times\m{W}\rightarrow\m{X}$ and $h:\m{X}\times \m{W}\rightarrow\m{Z}$ are assumed to be in $\m{C}_1$, i.e. $f,h\in\m{C}_1$, are assumed to be Lipschitz continuous and such that for all $x_0\in\m{X}$ and $w\in\m{W}^\mathbb{N}$ there is a unique solution $(x,z)\in(\m{X}\times\m{Z})^\mathbb{N}$. We define the set of solutions of \eqref{eq:nl} as
\begin{multline}
	\mathscr{B}:=\Big\lbrace (x,w,z)\in(\m{X}\times\m{W}\times\m{Z})^\mathbb{N}\mid \\(x,w,z) \text{ satisfies \eqref{eq:nl}} \Big\rbrace.
\end{multline}
We define $\mathscr{B}_\mr{x,w} := \pi_\mr{x,w}\mathscr{B}$, where $\pi_\mr{x,w}$ denotes the projection $(x,w) = \pi_\mr{x,w}(x,w,z)$.

In this paper we propose a convex output-feedback controller synthesis method to ensure incremental stability and performance of the closed-loop system of the form \eqref{eq:nl}. 
\begin{defn}[Incremental stability \cite{Tran2018}]\label{def:incrstab}
	A nonlinear systems of the form \eqref{eq:nl} is said to be incrementally asymptotically stable if there exists a function $\beta\in\m{KL}$ such that
	\begin{equation}
		\Vert x_{k}-x_k^* \Vert \leq \beta(\Vert x_0-x_0^*\Vert,k),
	\end{equation}
	for all $(x,\bar w),(x^*,\bar w)\in\mathscr{B}_\mr{x,w}$, $\bar w\in\m{W}^\mathbb{N}$ and $k\in\mathbb{N}$.
\end{defn}
While several incremental performance notions exist, such as incremental passivity, generalized incremental \htwo, etc. see \cite{Verhoek2020,Koelewijn2021a}, in this paper we focus on the performance notion in terms of the incremental \dltwo-gain. However, the theory that we develop in this paper is generally applicable to DT incremental (Q,S,R) dissipativity.
\begin{defn}[\dlitwo-gain \cite{Koelewijn2021a}]
A nonlinear system of the form \eqref{eq:nl} is said to have a finite incremental \dltwo-gain, denoted as \dlitwo-gain, if for all $w,w^*\in\ell_2$ and $x_0, x^*_0\in\m{X}$, with $(x,w,z),({x}^*,w^*,z^*) \in\mathscr{B}$
, there is a finite $\gamma\geq0$ and a function $\zeta(x,x^*)\geq0$ with $\zeta(x,x)=0$ such that
\begin{equation}\label{eq:li2gain}
	\norm{z-z^*}_2\leq\gamma\norm{w-w^*}_2+\zeta(x_0,x^*_0).
\end{equation}
The induced \dlitwo-gain of the system is the infimum of $\gamma$ such that \eqref{eq:li2gain} still holds.
\end{defn} 

Using results on incremental dissipativity theory for DT nonlinear systems from \cite{Koelewijn2021a}, the following test can be performance to analyze the induced $\dlitwo$-gain of a nonlinear system of the form \eqref{eq:nl}.

\begin{thm}[\dlitwo-gain analysis \cite{Koelewijn2021a}]\label{thm:li2gain}
	A nonlinear system of the form \eqref{eq:nl} has a finite \dlitwo-gain bounded by $\gamma$, if there exists a ${P}\succ 0$ such that for all $\mathsfit{x}\in{\m{X}}$ and $\mathsfit{w}\in{\m{W}}$
	\begin{equation}\label{eq:matli2}
		\begin{bmatrix}
			{P}		& \m{A}_\delta (\mathsfit{x},\mathsfit{w}){P} 	& \m{B}_\delta(\mathsfit{x},\mathsfit{w}) & 0\\
			\star 	& {P}						&0						& {P}\m{C}_\delta^\top(\mathsfit{x},\mathsfit{w})\\
			\star 	& \star  					&\gamma I 				& \m{D}_\delta^\top(\mathsfit{x},\mathsfit{w})\\
			\star 	& \star 						& \star 					&\gamma I
		\end{bmatrix}\succ 0,
	\end{equation} 
	where $
  \m{A}_\delta = \Partial{f}{x}$, $\m{B}_\delta = \Partial{f}{w}$, $\m{C}_\delta = \Partial{h}{x}$ and $\m{D}_\delta = \Partial{h}{w}$.
\end{thm}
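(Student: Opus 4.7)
The plan is to convert the LMI \eqref{eq:matli2} into an incremental dissipation inequality in two stages: first reading it as a KYP-type certificate for the differential (linearized) dynamics, and then integrating along a straight-line path between any two trajectories to transfer the certificate to the incremental signals. Standard Schur-complement manipulations, most easily obtained by congruence-transforming \eqref{eq:matli2} by $\diag(P^{-1},P^{-1},I,I)$ and successively eliminating the trailing $\gamma I$ block and the leading $P^{-1}$ block, show that \eqref{eq:matli2} is equivalent, pointwise in $(\mathsfit{x},\mathsfit{w})\in\m{X}\times\m{W}$, to the strict inequality
$$\begin{bmatrix} P^{-1}-\m{A}_\delta^\top P^{-1}\m{A}_\delta & -\m{A}_\delta^\top P^{-1}\m{B}_\delta \\ \star & \gamma I-\m{B}_\delta^\top P^{-1}\m{B}_\delta \end{bmatrix} \succ \frac{1}{\gamma}\begin{bmatrix} \m{C}_\delta^\top \\ \m{D}_\delta^\top \end{bmatrix}\begin{bmatrix}\m{C}_\delta & \m{D}_\delta\end{bmatrix}.$$
This is exactly the dissipation inequality certifying an $\ell_2$-gain of at most $\gamma$ for the differential dynamics $\delta x_{k+1}=\m{A}_\delta\delta x_k+\m{B}_\delta\delta w_k$, $\delta z_k=\m{C}_\delta\delta x_k+\m{D}_\delta\delta w_k$ with quadratic storage $V_\delta(\delta x)=\delta x^\top P^{-1}\delta x$ and supply $\gamma\norm{\delta w}^2-\gamma^{-1}\norm{\delta z}^2$.

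To lift this to the incremental setting I would fix two trajectories $(x,w,z),(x^*,w^*,z^*)\in\mathscr{B}$ and write $\bar x_k:=x_k-x_k^*$, $\bar w_k:=w_k-w_k^*$, $\bar z_k:=z_k-z_k^*$. Convexity of $\m{X}$ and $\m{W}$ guarantees that the segments $x_k^\lambda:=x_k^*+\lambda\bar x_k$ and $w_k^\lambda:=w_k^*+\lambda\bar w_k$ stay inside $\m{X}\times\m{W}$ for every $\lambda\in[0,1]$, and the fundamental theorem of calculus applied to $f$ and $h$ then yields
$$\bar x_{k+1}=\tilde A_k\bar x_k+\tilde B_k\bar w_k, \qquad \bar z_k=\tilde C_k\bar x_k+\tilde D_k\bar w_k,$$
where $\tilde A_k:=\int_0^1\m{A}_\delta(x_k^\lambda,w_k^\lambda)\,d\lambda$ and $\tilde B_k,\tilde C_k,\tilde D_k$ are defined analogously. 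Because the entries of \eqref{eq:matli2} are \emph{affine} in the Jacobian blocks $(\m{A}_\delta,\m{B}_\delta,\m{C}_\delta,\m{D}_\delta)$ for a fixed $P$, a continuity/compactness argument on $\lambda\in[0,1]$ gives that \eqref{eq:matli2} continues to hold with $(\tilde A_k,\tilde B_k,\tilde C_k,\tilde D_k)$ in place of the original Jacobians. Re-applying the Schur reduction of the first paragraph to this averaged LMI therefore produces, for every $k\in\mathbb{N}$,
$$\bar x_{k+1}^\top P^{-1}\bar x_{k+1} - \bar x_k^\top P^{-1}\bar x_k \le \gamma\norm{\bar w_k}^2 - \gamma^{-1}\norm{\bar z_k}^2.$$

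Summing this inequality over $k=0,\dots,N$, discarding the nonnegative terminal term $\bar x_{N+1}^\top P^{-1}\bar x_{N+1}$, and letting $N\to\infty$ yields
$$\norm{z-z^*}_2^2 \le \gamma^2\norm{w-w^*}_2^2 + \gamma\,(x_0-x_0^*)^\top P^{-1}(x_0-x_0^*),$$
from which \eqref{eq:li2gain} follows via the subadditivity $\sqrt{a+b}\le\sqrt{a}+\sqrt{b}$ with the choice $\zeta(x_0,x_0^*):=\sqrt{\gamma\,(x_0-x_0^*)^\top P^{-1}(x_0-x_0^*)}$, which plainly satisfies $\zeta(x,x)=0$. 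The main obstacle I expect is the passage from the pointwise LMI to its averaged counterpart along the line segment: it rests crucially on the affine dependence of \eqref{eq:matli2} on the Jacobian blocks for a \emph{common} $P$, which is precisely why the same matrix $P$ is required to certify \eqref{eq:matli2} uniformly over $\m{X}\times\m{W}$ and why $\m{X}$ and $\m{W}$ are assumed to be convex.
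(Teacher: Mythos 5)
Your proof is correct: the two Schur reductions of \eqref{eq:matli2} are carried out accurately, the Hadamard/mean-value representation $\bar x_{k+1}=\tilde A_k\bar x_k+\tilde B_k\bar w_k$ is exact because $f,h\in\m{C}_1$ and $\m{X},\m{W}$ are convex, and the averaging step is sound since \eqref{eq:matli2} is jointly affine in $(\m{A}_\delta,\m{B}_\delta,\m{C}_\delta,\m{D}_\delta)$ for fixed $P,\gamma$, so the integral over $\lambda\in[0,1]$ of the pointwise LMIs equals the LMI evaluated at the averaged Jacobians (and compactness of the segment even preserves strictness). The telescoping sum and the choice $\zeta(x_0,x_0^*)=\sqrt{\gamma\,(x_0-x_0^*)^\top P^{-1}(x_0-x_0^*)}$ then give \eqref{eq:li2gain}. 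Be aware, however, that the paper does not prove this theorem itself; it cites \cite{Koelewijn2021a}, and the mechanism indicated there (and in the discussion around Definition \ref{def:diffform} and Remark \ref{rem:diffprimimply}) is different in the lifting step: one first establishes that the differential form \eqref{eq:difform} is \dltwo-gain stable with quadratic storage $\delta x^\top P^{-1}\delta x$, then connects $(x,w,z)$ and $(x^*,w^*,z^*)$ by a smoothly parameterized family of \emph{actual system trajectories} $(\bar x(\lambda),\bar w(\lambda),\bar z(\lambda))\in\mathscr{B}$, applies the differential dissipation inequality along each member, and integrates over $\lambda$ using Minkowski's integral inequality to bound $\norm{z-z^*}_2$. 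Your route replaces that homotopy of trajectories by a straight-line segment in $\m{X}\times\m{W}$ at each fixed time $k$ — the intermediate points need not lie on any trajectory — which makes the argument more elementary and avoids the well-posedness bookkeeping for the interpolating trajectories. The price is that your averaging step leans essentially on the storage matrix $P$ being constant (state-independent); the trajectory-homotopy formulation is the one that generalizes to state-dependent storage/Riemannian metrics, which is why the paper frames the link to incremental analysis through differential dynamics and geodesics.
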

The proof can be found in \cite{Koelewijn2021a}. 
Note that a bounded \dlitwo-gain implies incrementally asymptotically stability, see \cite{Koelewijn2021a}, which we will refer to as the system being \dlitwo-gain stable\footnote{Similarly, if a system has a bounded \dltwo-gain and is asymptotically stable we say it is \dltwo-gain stable.}. These results are related to the so-called dissipativity and stability of the differential form, which represents the dynamics of the variation along the trajectories of the system.
\begin{defn}[Differential form]\label{def:diffform}
	The differential form of a nonlinear system \eqref{eq:nl} is given by
	\begin{subequations}\label{eq:difform}
		\begin{alignat}{2}
			\delta x_{k+1} &= \m{A}_\delta(x_k,w_k)\delta x_k&&+\m{B}_\delta(x_k,w_k)\delta w_k;\\
			\delta z_k   &= \m{C}_\delta(x_k,w_k)\delta x_k&&+\m{D}_\delta(x_k,w_k)\delta w_k;
		\end{alignat}
	\end{subequations}
	where $\delta x_\mr{k}\in\mathbb{R}^{n_\mr{x}}$, $\delta w_{k} \in \mathbb{R}^{n_\mr{w}}$ and $\delta z_{k} \in \mathbb{R}^{n_\mr{z}}$ are the state, generalized disturbance and generalized performance associated with the differential form, respectively, and $(x,w)\in\mathscr{B}_\mr{x,w}$. In literature, \eqref{eq:difform} is also referred to as the variational or differential dynamics \cite{Crouch1987,Manchester2018}.
\end{defn}
In view of Definition \ref{def:diffform}, we are going to call \eqref{eq:nl} the primal form of \eqref{eq:difform}. The differential variables can be linked to the incremental analysis as follows. Consider $(x,w,z),({x}^*,w^*,z^*) \in\mathscr{B}$, based on which we can define a smoothly parameterized family of trajectories $(\bar x(\lambda),\bar w(\lambda),\bar z(\lambda))\in\mathfrak{B}$ with $\lambda\in[0,1]$ such that $(\bar x(1),\bar w(1),\bar z(1))=(x,w,z)$ and $(\bar x(0),\bar w(0),\bar z(0))=(x^*,w^*,z^*)$. Note that the geodesic, i.e. the minimum energy path under a given metric, corresponds to $\bar{x}(\lambda)$. The differential variables for \eqref{eq:difform} are then defined as $\delta x_k \!=\!\left.\frac{\partial\bar{x}_k(\lambda)}{\partial \lambda}\right\vert_{\lambda=1}$, $\delta w_k \!=\!\left.\frac{\partial\bar{w}_k(\lambda)}{\partial \lambda}\right\vert_{\lambda=1}$ and $\delta z_k\! =\! \left.\frac{\partial\bar{z}_k(\lambda)}{\partial \lambda}\right\vert_{\lambda=1}$, see \cite{Koelewijn2021a,Verhoek2020} for more details.

\begin{rem}\label{rem:diffprimimply}
As a result of Theorem \ref{thm:li2gain}, if the differential form \eqref{eq:difform} is \dltwo-gain stable with its \dltwo-gain bounded by $\gamma$, then the primal form \eqref{eq:nl} is \dlitwo-gain stable with a \dlitwo-gain bound of $\gamma$. More generally, standard dissipativity (with quadratic storage and supply function) of the differential form (also referred to as differential dissipativity) implies incremental dissipativity of the primal form. Importantly, by embedding the differential form in an LPV representation, inequality \eqref{eq:matli2} can be efficiently solved as a convex test using the LPV framework, see \cite{Koelewijn2021a} for more details.
\end{rem}

Similar to the systematic LTI and LPV controller synthesis frameworks \cite{Wu2001}, widely used in industry, we make use of the generalized plant concept to shape the performance of the closed-loop dynamics using LTI weighting filters. In this paper we assume that the generalized plant (with LTI weighting filters included) is of the form
\begin{subequations}\label{eq:genplant}
	\begin{align}
		x_{k+1} &= f(x_{k})+B_\mr{w} w_{k}+B_\mr{u}u_{k};\\
		z_{k} &= h_\mr{z}(x_{k})+D_\mr{zw} w_{k}+D_\mr{zu} u_{k};\\
		y_{k} &= C_\mr{y}x_k+D_\mr{yw} w_{k};
	\end{align}
\end{subequations}
where $x_k\in\m{X}\subseteq\mathbb{R}^{n_\mr{x}}$, $w_{k} \in \m{W} \subseteq \mathbb{R}^{n_\mr{w}}$ and $z_{k} \in \m{Z} \subseteq \mathbb{R}^{n_\mr{z}}$ are the state, generalized disturbance and generalized performance signals of the plant, respectively, and where $u_{k}\in\m{U}\subseteq\mathbb{R}^{n_\mr{u}}$ is the control input and $y_{k}\in\m{Y}\subseteq\mathbb{R}^{n_\mr{y}}$ is the measured output. The sets $\m{X}$, $\m{W}$, $\m{U}$, $\m{Z}$ and $\m{Y}$ are open and convex, containing the origin. The solutions of \eqref{eq:genplant} satisfy \eqref{eq:genplant} in the ordinary sense and are restricted to $k\in\mathbb{N}$. 
The functions $f:\m{X}\rightarrow\m{X}$ and $h_\mr{z}:\m{X}\rightarrow\m{Z}$ are assumed to be in $\m{C}_1$, i.e. $f,h_\mr{z}\in\m{C}_1$. Furthermore, $B_\mr{w}\in\mathbb{R}^{n_\mr{x}\times n_\mr{w}}$, $B_\mr{u}\in\mathbb{R}^{n_\mr{x}\times n_\mr{u}}$, $D_\mr{zw}\in\mathbb{R}^{n_\mr{z}\times n_\mr{w}}$, $D_\mr{zu}\in\mathbb{R}^{n_\mr{z}\times n_\mr{u}}$ and $D_\mr{yw}\in\mathbb{R}^{n_\mr{y}\times n_\mr{w}}$. 
 The solution set of \eqref{eq:genplant} is defined as\begin{multline}
	\mathfrak{B}:=\Big\lbrace (x,w,u,z,y)\in(\m{X}\times\m{W}\times\m{U}\times\m{Z}\times\m{Y})^\mathbb{N}\mid \\(x,w,u,z,y) \text{ satisfies \eqref{eq:genplant}} \Big\rbrace. 
\end{multline}
\begin{rem}
Note that while \eqref{eq:genplant} might seem restrictive, a larger class of nonlinear generalized plants of the form
\begin{subequations}\label{eq:genplantbig}
	\begin{align}
		x_{k+1} &= f(x_{k},u_k)+B_\mr{w} w_{k};\\
		z_{k} &= h_\mr{z}(x_{k},u_k)+D_\mr{zw} w_{k} ;\\
		y_{k} &= h_\mr{y}(x_{k})+ D_\mr{yw}w_k;
	\end{align}
\end{subequations}
where $f:\m{X}\times\m{U}\rightarrow\m{X}$, $h_\mr{z}:\m{X}\times\m{U}\rightarrow\m{Z}$ and $h_\mr{y}:\m{X}\rightarrow\m{Y}$ with $f,h_\mr{z},h_\mr{y}\in\m{C}_1$,
	can be written in the form \eqref{eq:genplant} by interconnecting appropriate filters to $u$ and $y$ of \eqref{eq:genplantbig}.
\end{rem}

 The to-be-designed controller for the generalized plant \eqref{eq:genplant} is of the form\begin{subequations}\label{eq:controller}
	\begin{align}
		x_{\mr{c},k+1} &= f_\mr{c}(x_{\mr{c},k},u_{\mr{c},k});\\
		y_{\mr{c},k} &= h_\mr{c}(x_{\mr{c},k},u_{\mr{c},k});
	\end{align}
\end{subequations}
where $x_\mr{c}$ is the state, $u_\mr{c}$ is the input and $y_\mr{c}$ is the output of the controller. The closed-loop interconnection of a generalized plant $P$ given by \eqref{eq:genplant} and a controller $K$ given by \eqref{eq:controller} with $u_\mr{c} = y$ and $u = y_\mr{c}$ is denoted by $\m{F}_l(P,K)$, which is assumed to be well-posed and hence in the form \eqref{eq:nl}. In this paper we propose a convex controller synthesis method such that the closed-loop interconnection $\m{F}_l(P,K)$ is \dlitwo-gain stable with minimal \dlitwo-gain.

\section{Controller Synthesis Method}\label{sec:synthesis}\vspace{-.5em}
\subsection{Overview}\vspace{-.5em}
In this section, the proposed controller synthesis method is discussed. In order to obtain a DT controller ensuring closed-loop \dlitwo-gain stability and performance the following procedure is proposed, which follows along the same lines as the CT version in \cite{Koelewijn2020b}:
\begin{enumerate}
	\item For the generalized plant \eqref{eq:genplant}, its differential form is computed, which is then embedded in an LPV representation.
	\item For the LPV embedding of the differential form of the generalized plant, an LPV controller is synthesized such that the closed-loop interconnection is \dltwo-gain stable with minimal \dltwo-gain $\gamma$. This controller will be referred to as the differential (LPV) controller. \label{itm:step2}
	\item The differential controller designed in Step \ref{itm:step2} is realized into a primal form. The resulting closed-loop interconnection of the primal form of the generalized plant and realized primal form of the controller then is \dlitwo-gain stable with a \dlitwo-gain bounded by $\gamma$.
\end{enumerate}
As contribution of this paper, we show how these steps can be accomplished and we provide proofs of the mentioned implications.
\subsection{LPV embedding of the generalized plant}\label{sec:stepA}\vspace{-.5em}
As a first step in our proposed controller synthesis procedure, the differential form of the generalized plant is computed and embedded in an LPV representation. The differential form of \eqref{eq:genplant} is given by
\begin{subequations}\label{eq:difgenplant}
	\begin{align}
		\delta x_{k+1} &= A_\delta(x_k)\delta x_k +B_\mr{w} \delta w_k+B_\mr{u}\delta u_k;\\
		\delta z_k &= C_{\delta\mr{z}}(x_k) \delta x_k+D_\mr{zw} \delta w_k+D_\mr{ zu} \delta u_k;\\
		\delta y_k &= C_{\mr{y}} \delta x_k +D_\mr{yw} \delta w_k;\label{eq:difgenplanty}
	\end{align}
\end{subequations}
where $A_\delta = \Partial{f}{x}$, $C_{\delta\mr{z}} = \Partial{h_\mr{z}}{x}$ and where $x\in\pi_\mr{x}\mathfrak{B}$ is the state of the primal form \eqref{eq:genplant}. The differential form of the generalized plant \eqref{eq:difgenplant} is then embedded in an LPV representation, resulting in an LPV model \eqref{eq:difgenplantlpv}, in accordance with the following definition.
 \begin{defn}[LPV embedding]\label{def:lpvemb}
	Assume we have a nonlinear system of the form \eqref{eq:genplant} with differential form given by \eqref{eq:difgenplant}. The LPV state-space model 
\begin{subequations}\label{eq:difgenplantlpv}
	\begin{align}
		\delta x_{k+1} &= A(\rho_k)\delta x_k +B_\mr{w} \delta w_k+B_\mr{u}\delta u_k;\\
		\delta z_k &= C_{\mr{z}}(\rho_k) \delta x_k+D_\mr{zw} \delta w_k+D_\mr{zu}\delta u_k;\\
		\delta y_k &= C_{\mr{y}} \delta x_k +D_\mr{yw} \delta w_k;
	\end{align}
\end{subequations}
	where $\rho_k\in\m{P}\subset\mathbb{R}^{n_\rho}$ is the scheduling-variable, is an LPV embedding of the differential form \eqref{eq:difgenplant} on the region $\mathscr{X}\supseteq\m{X}$, if there exists a function $\psi:\mathbb{R}^{n_\mr{x}}\rightarrow\mathbb{R}^{n_\rho}$, called the scheduling-map, such that under a given choice of function class $\mathfrak{A}$ for ${A},C_\mr{z}$, e.g. affine, polynomial, etc., $A(\psi(x))=A_\delta(x)$, ${C}_\mr{z}(\psi(x))={C}_\mr{\delta z}(x)$ for all $x\in\mathscr{X}$ and $\psi(\mathscr{X})\subseteq \m{P}$ where $\m{P}$ is a (minimal) convex hull with $n$ vertices. 
	\end{defn}

Next, we use the LPV embedding of the differential form of the generalized plant \eqref{eq:difgenplantlpv} in order to be able to use convex controller synthesis.
\vspace{-.2em}
\subsection{Differential controller synthesis}\label{sec:diffsyn}\vspace{-.3em}
In this step, a controller for the differential form of the generalized plant \eqref{eq:difgenplant} is synthesized such that the closed-loop interconnection is \dltwo-gain stable with minimal \dltwo-gain. To convexify this problem, the LPV framework is used to perform this step. Hence, we synthesize an LPV controller for the LPV embedding of the differential form of the generalized plant \eqref{eq:difgenplantlpv}, obtained in the previous step (Section \ref{sec:stepA}). 
The LPV controller is assumed to be of the form\begin{subequations}\label{eq:difcontroller}
\begin{align}
	\delta x_{\mr{c},k+1} &= A_{\delta\mr{c}}(\rho_k)\delta x_{\mr{c},k}+B_{\delta\mr{c}}(\rho_k)\delta u_{\mr{c},k};\label{eq:difcontrollerx}\\
	\delta y_{\mr{c},k} &= C_{\delta\mr{c}}(\rho_k)\delta x_{\mr{c},k}+D_{\delta\mr{c}}(\rho_k)\delta u_{\mr{c},k};\label{eq:difcontrollery}
\end{align}
\end{subequations}
where $x_{\mr{c},k}\in\mathbb{R}^{n_\mr{x_c}}$ is the state, $u_{\mr{c},k}\in\mathbb{R}^{n_\mr{y}}$ is the input and $y_{\mr{c},k}\in\mathbb{R}^{n_\mr{u}}$ is the output of the controller and $A_{\delta\mr{c}},\dots,D_{\delta\mr{c}}\in\mathfrak{A}$. We will refer to \eqref{eq:difcontroller} as the differential controller.
 Various methods exists to obtain an LPV controller \eqref{eq:difcontroller}, i.e. to synthesize a DT LPV controller minimizing the \dltwo-gain of the closed-loop system. Next, we will briefly describe one particular method.

 \begin{lem}[Differential LPV controller synthesis]\label{lem:lpvsyn}
 There exists a controller \eqref{eq:difcontroller} such that the closed-loop interconnection of \eqref{eq:difgenplantlpv} and \eqref{eq:difcontroller} is \dltwo-gain stable and has an \dltwo-gain bounded by $\gamma$ if 
 there exists matrices $\m{P}_\mr{x}$, $\m{P}_\mr{z}\in\mathbb{S}^{n_\mr{x}}_+$, matrices $\m{P}_\mr{y},J,N,S\in\mathbb{R}^{n_\mr{x}\times n_\mr{x}}$ and matrix functions $U(\rho)\in\mathbb{R}^{n_\mr{x}\times n_\mr{x}}$, $V(\rho)\in\mathbb{R}^{n_\mr{x}\times n_\mr{y}}$, $W(\rho)\in\mathbb{R}^{n_\mr{u}\times n_\mr{x}}$, $X(\rho)\in\mathbb{R}^{n_\mr{u}\times n_\mr{y}}$ s.t.
	\begin{equation}\label{eq:synlmi}
		\begin{bmatrix}
			\m{P} & \m{A}(\rho) & \m{B}(\rho) & 0\\
			\star & \m{G} & 0& \m{C}(\rho)^\top\\
			\star & \star & \gamma I & \m{D}(\rho)^\top\\
			\star & \star & \star & \gamma I
		\end{bmatrix}\succ 0,
	\end{equation}
	where
\begin{gather}
	\m{P} = \begin{bmatrix}
		\m{P}_\mr{x} & \m{P}_\mr{y}\\\m{P}_\mr{y}^\top  &\m{P}_\mr{z}
	\end{bmatrix}, \quad G = \begin{bmatrix}
		J+J^\top-\m{P}_\mr{x}&I+S^\top-\m{P}_\mr{y}\\\star & N+N^\top-\m{P}_\mr{z}
	\end{bmatrix},\notag\\
	\m{A}(\rho) = \begin{bmatrix}
		A(\rho)J+B_\mr{u}W(\rho)& A+B_\mr{u}X(\rho)C_\mr{y}\\
		U(\rho)&NA(\rho)+V(\rho)C_\mr{y},
			\end{bmatrix},\notag \\
	\m{B}(\rho) = \begin{bmatrix}
		B_\mr{w}+B_\mr{u}X(\rho)D_\mr{yw}\\
		NB_\mr{w}+V(\rho)D_\mr{yw}
			\end{bmatrix},\\ \m{C}(\rho) = \begin{bmatrix}
				C_\mr{z}(\rho)J+D_\mr{zu}W(\rho) & C_\mr{z}(\rho)+D_\mr{zu}X(\rho)C_\mr{y}
			\end{bmatrix},\notag\\
	\m{D}(\rho) = D_\mr{zw}+D_\mr{zu}X(\rho)D_\mr{yw}.\notag
\end{gather}
\end{lem}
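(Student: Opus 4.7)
The plan is to apply the incremental analysis test of Theorem~\ref{thm:li2gain}, specialized via Remark~\ref{rem:diffprimimply} to the \dltwo-gain of the differential/LPV form, to the closed-loop interconnection of the LPV plant \eqref{eq:difgenplantlpv} with the controller \eqref{eq:difcontroller}, and then to convexify the resulting bilinear matrix inequality through a dilated change of variables adapted to the DT setting.

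First, I would form the closed-loop state $\delta x_\mr{cl}=\col(\delta x,\delta x_\mr{c})$ and write the closed-loop matrices $(\m A_\mr{cl},\m B_\mr{cl},\m C_\mr{cl},\m D_\mr{cl})(\rho)$ as affine functions of $(A_{\delta\mr c},B_{\delta\mr c},C_{\delta\mr c},D_{\delta\mr c})(\rho)$. Substituting into \eqref{eq:matli2} produces an inequality in the closed-loop Lyapunov matrix and the controller matrices that is bilinear because of the products $\m A_\mr{cl}P_\mr{cl}$ and $P_\mr{cl}\m C_\mr{cl}^\top$ and because $\m A_\mr{cl}$ itself depends on the unknowns. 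I would then partition $P_\mr{cl}$ and $P_\mr{cl}^{-1}$ into conformable $2\times 2$ blocks, introduce the slack matrices $J,N,S$ in the dilated-LMI fashion standard for DT output-feedback synthesis, and apply a congruence transformation with a block matrix built from these partitions. A bijective change of variables repackaging $(A_{\delta\mr c},\dots,D_{\delta\mr c})$ together with $J,N,B_\mr u,C_\mr y$ into $(U(\rho),V(\rho),W(\rho),X(\rho))$ should then cancel every remaining bilinear product; the blocks $J+J^\top-\m P_\mr x$, $N+N^\top-\m P_\mr z$ and $I+S^\top-\m P_\mr y$ that appear in $G$ arise from the standard slack inequality $(J-\m P_\mr x)^\top\m P_\mr x^{-1}(J-\m P_\mr x)\succeq 0$, which effectively replaces $\m P_\mr x^{-1}$ (respectively $\m P_\mr z^{-1}$) by the free matrix $J$ (resp.\ $N$) inside the LMI. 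Recognizing the resulting block entries as precisely $\m P,\m A(\rho),\m B(\rho),\m C(\rho),\m D(\rho),G$ then completes the sufficiency direction, and the controller matrices are recovered from $(U,V,W,X)$ by inverting this map, a step that is well-defined whenever $J$ and $N$ are nonsingular, which is guaranteed by the positivity of the diagonal blocks of $G$.

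The main obstacle is identifying the correct change of variables: unlike the CT analogue in \cite{Koelewijn2020b}, the DT analysis LMI \eqref{eq:matli2} carries the product $\m A_\delta P$ rather than $\m A_\delta+\m A_\delta^\top$, so the classical Scherer/Masubuchi transformation used in CT does not directly linearize the closed-loop inequality. The dilation via $J,N,S$ resolves this by decoupling the controller unknowns from $P_\mr{cl}$ in the problematic blocks; once the transformation is in place and the new variables are defined so that the coupling term $I+S^\top-\m P_\mr y$ appears in the off-diagonal of $G$, verifying that every entry matches \eqref{eq:synlmi} is a routine, if tedious, algebraic check.
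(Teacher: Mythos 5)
The paper gives no proof of this lemma at all; it defers entirely to \cite{Ali2011}. Your outline --- closing the loop, applying the DT analysis inequality \eqref{eq:matli2}, partitioning the closed-loop Lyapunov matrix, introducing the slack variables $J,N,S$ via the bound $J+J^\top-\m{P}_\mr{x}\preceq J^\top \m{P}_\mr{x}^{-1}J$ (valid since $(J-\m{P}_\mr{x})^\top\m{P}_\mr{x}^{-1}(J-\m{P}_\mr{x})\succeq 0$), and linearizing through the $(U,V,W,X)$ change of variables recovered by \eqref{eq:contrconstruct}--\eqref{eq:contconstruct2} --- is precisely the standard dilated-LMI derivation that the cited reference carries out, so your approach matches the one the paper relies on; the only substantive step you leave unexecuted is the block-by-block verification of the congruence transformation and variable substitution, which is where all the algebra (and the justification that $S-NJ$ admits an invertible factorization $RL$) actually resides.
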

See \cite{Ali2011} for the proof of Lemma \ref{lem:lpvsyn}.
Note that $\gamma$ appears linearly in \eqref{eq:synlmi}, hence, it can be minimized when solving the LMI \eqref{eq:synlmi}.
If a solution to \eqref{eq:synlmi} has been found, the matrices for \eqref{eq:difcontroller} can be constructed by first finding matrices $R, L\in\mathbb{R}^{n_\mr{x}\times n_\mr{x}}$ such that $S=NJ+RL$, and then computing
\begin{equation}\label{eq:contrconstruct}
\hspace{-.1em}\begin{bmatrix}
	A_\mr{\delta c}(\rho) & B_\mr{\delta c}(\rho)\\C_\mr{\delta c}(\rho) & D_\mr{\delta c}(\rho)
\end{bmatrix}\!=\!\begin{bmatrix}
	R & N B_\mr{u}\\0 & I
\end{bmatrix}^{-1}\!\Theta(\rho)\!\begin{bmatrix}
	L & 0\\C_\mr{y}J&I
\end{bmatrix}^{-1}\hspace{-.5em},\hspace{-.4em}
\end{equation}
where
\begin{equation}\label{eq:contconstruct2}
		\Theta(\rho) = \begin{bmatrix}
	U(\rho)&V(\rho)\\W(\rho)&X(\rho)
\end{bmatrix}-\begin{bmatrix}
	NA(\rho)J & 0\\0&0
\end{bmatrix}.
\end{equation}
If \eqref{eq:difgenplantlpv}, $U(\rho)$, $V(\rho)$, $W(\rho)$ and $X(\rho)$ have an affine dependency on the scheduling-variable, then \eqref{eq:synlmi} can be solved as a convex optimization problem using LMIs, see \cite{Ali2011}, and \eqref{eq:difcontroller} can be recovered with affine dependency in terms of \eqref{eq:contrconstruct}, \eqref{eq:contconstruct2}. Note the method described above uses a (quadratic) scheduling-independent storage function. 

\begin{thm}[Differential closed-loop \dltwo-gain]\label{thm:diffclpl2}
	The closed-loop interconnection of the differential form of the generalized plant $\delta P$ given by \eqref{eq:difgenplant} and the controller $\delta K$ given by \eqref{eq:difcontroller}, denoted by $\m{F}_l(\delta P, \delta K)$, is \dltwo-gain stable and has an \dltwo-gain bounded by $\gamma$ for all $x\in\pi_\mr{x}\mathfrak{B}$ if the closed-loop interconnection of the LPV embedding of the differential form of the generalized plant $\delta P_\mr{LPV}$ given by \eqref{eq:difgenplantlpv} and $K$, denoted by $\m{F}_l(\delta P_\mr{LPV}, \delta K)$, is \dltwo-gain stable and with a bounded \dltwo-gain of $\gamma$ for all $\rho\in\m{P}^\mathbb{N}$. 
\end{thm}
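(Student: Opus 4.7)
The plan is to prove this by a trajectory-embedding argument: every trajectory of the closed-loop differential nonlinear plant with the controller can be reproduced as a closed-loop trajectory of the LPV embedding under a particular choice of scheduling signal, and then the uniform-in-$\rho$ bound transfers directly.

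First, I would write both closed-loop interconnections $\mathcal{F}_l(\delta P, \delta K)$ and $\mathcal{F}_l(\delta P_{\mathrm{LPV}}, \delta K)$ explicitly in state-space form. Because the generalized plant \eqref{eq:genplant} has constant $B_\mr{w}, B_\mr{u}, C_\mr{y}, D_\mr{zw}, D_\mr{zu}, D_\mr{yw}$, these matrices, together with the controller matrices $A_{\delta\mr{c}}(\rho),\ldots,D_{\delta\mr{c}}(\rho)$ of \eqref{eq:difcontroller}, appear identically in both closed loops. The only difference resides in the scheduled blocks: $\mathcal{F}_l(\delta P, \delta K)$ contains $A_\delta(x_k), C_{\delta\mr{z}}(x_k)$, while $\mathcal{F}_l(\delta P_{\mathrm{LPV}}, \delta K)$ contains $A(\rho_k), C_\mr{z}(\rho_k)$.

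Next, for an arbitrary $x\in\pi_\mr{x}\mathfrak{B}$, I would use the scheduling map $\psi$ of Definition \ref{def:lpvemb} to define the induced scheduling trajectory $\rho^\star_k := \psi(x_k)$. Since $x_k\in\mathcal{X}\subseteq\mathscr{X}$ and $\psi(\mathscr{X})\subseteq\mathcal{P}$, we have $\rho^\star\in\mathcal{P}^{\mathbb{N}}$. The embedding identities $A(\psi(x_k))=A_\delta(x_k)$ and $C_\mr{z}(\psi(x_k))=C_{\delta\mr{z}}(x_k)$ then make the state-space realization of $\mathcal{F}_l(\delta P_{\mathrm{LPV}}, \delta K)$ along $\rho^\star$ coincide, term by term, with that of $\mathcal{F}_l(\delta P, \delta K)$ along $x$. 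Consequently, any input--output trajectory $(\delta w,\delta z)$ of $\mathcal{F}_l(\delta P, \delta K)$ associated with the underlying state $x$ is also an input--output trajectory of $\mathcal{F}_l(\delta P_{\mathrm{LPV}}, \delta K)$ with scheduling $\rho^\star$. Applying the hypothesis (a $\dltwo$-gain bound of $\gamma$ for $\mathcal{F}_l(\delta P_{\mathrm{LPV}}, \delta K)$ for every $\rho\in\mathcal{P}^{\mathbb{N}}$) to this particular $\rho^\star$ yields $\norm{\delta z}_2\le \gamma\norm{\delta w}_2$ up to the standard initial-state bias, which, since $x\in\pi_\mr{x}\mathfrak{B}$ was arbitrary, establishes the claim.

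The main conceptual subtlety — rather than a technical obstacle — is the asymmetry between the two settings: on the LPV side $\rho$ is exogenous and allowed to range freely over $\mathcal{P}^{\mathbb{N}}$, while on the nonlinear side $\rho^\star_k=\psi(x_k)$ is state-driven. The argument succeeds precisely because the hypothesis delivers a \emph{uniform} bound in $\rho$, which dominates the state-constrained case by specialization. The only other item to verify explicitly is that the constant generalized-plant blocks and the controller matrices enter both closed-loops identically, so that after substituting $\rho=\rho^\star$ the two realizations match beyond just the scheduled $A$ and $C_\mr{z}$ blocks; this is immediate from the form of \eqref{eq:genplant}, \eqref{eq:difgenplant}, \eqref{eq:difgenplantlpv}, and \eqref{eq:difcontroller}.
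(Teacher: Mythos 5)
Your argument is correct and is essentially the paper's own proof: both define the induced scheduling trajectory $\rho_k=\psi(x_k)$, use $\psi(\mathscr{X})\subseteq\m{P}$ together with the embedding identities $A(\psi(x))=A_\delta(x)$, $C_\mr{z}(\psi(x))=C_{\delta\mr{z}}(x)$ to make the two closed loops coincide along that scheduling signal, and then specialize the uniform-in-$\rho$ bound. Your added check that the constant plant blocks and controller matrices enter both interconnections identically is a detail the paper leaves implicit, but it does not change the route.
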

\begin{proof}
	Through the LPV embedding $\delta P_\mr{LPV}$ \eqref{eq:difgenplantlpv}, we have that $\rho=\psi(x)$ and $\psi(\mathscr{X})\subseteq\m{P}$, hence, if $x\in\mathscr{X}^\mathbb{N}$ then $\rho\in\m{P}^\mathbb{N}$. 
	Consequently, if $\m{F}_l(\delta P_\mr{LPV}, \delta K)$ is \dltwo-gain stable and has an \dltwo-gain of $\gamma$ for all $\rho\in\m{P}^\mathbb{N}$, then $\m{F}_l(\delta P, \delta K)$ is \dltwo-gain stable and its \dltwo-gain is bounded by $\gamma$ for all $x\in\mathscr{X}^\mathbb{N}$, hence, this also holds for all $x\in\pi_\mr{x}\mathfrak{B}$ as $\pi_\mr{x}\mathfrak{B}\subseteq\mathscr{X}^\mathbb{N}$.	
\end{proof}\vspace{-.5em}
Next it is shown how to realize the primal form of the controller based on synthesized differential controller \eqref{eq:difcontroller} such that the closed-loop interconnection of the primal form of the controller and primal form of the generalized plant \eqref{eq:genplant} is \dlitwo-gain stable.


\subsection{Primal controller realization}\vspace{-.2em}
Inspired by the work for CT systems \cite{Manchester2018,Koelewijn2020b} and for DT state-feedback design \cite{Wei2021} we make use of a path integral based realization to obtain the primal form of the controller that enforces convergence of the plant response towards a desired steady-state response $(x^*,w^*,u^*,z^*,y^*)\in\mathfrak{B}$. Let us denote the state of $\m{F}_l(\delta P, \delta K)$ and $\m{F}_l( P, K)$ as $\delta \chi_k=\col(\delta x_k,\delta x_{\mr{c},k})\in\mathbb{R}^{n_\mr{x}+n_\mr{x_\mr{c}}}$ and $\chi_k=\col(x_k, x_{\mr{c},k})\in\mathbb{R}^{n_\mr{x}+n_\mr{x_\mr{c}}}$, respectively.
\begin{thm}[Primal controller realization]\label{thm:primreal}
Given a differential controller $\eqref{eq:difcontroller}$ synthesized for $\delta P$ \eqref{eq:difgenplant} such that the closed-loop is \dltwo-gain stable under a (differential) storage function of the form $\delta \chi^\top P\delta \chi$ where $P\succ 0$, a primal realization of $\delta K$ is given by\begin{subequations}\label{eq:realizecontroller}
	\begin{align}
	\Delta{x}_{\mr{c},k+1} &= A_{\mr{c},k} \Delta{x}_{\mr{c},k}+B_{\mr{c},k}  (u_{\mr{c},k}-u_{\mr{c},k}^*);\\
	y_{\mr{c},k} &= y_{\mr{c},k}^* +C_{\mr{c},k} \Delta{x}_{\mr{c},k}+D_{\mr{c},k}  (u_{\mr{c},k}-u_{\mr{c},k}^*);
\end{align}
\end{subequations}
where $\Delta x_{\mr{c,k}}\in\mathbb{R}^{n_\mr{x_c}}$, $(y_\mr{c}^*,u_\mr{c}^*) = (u^*,y^*)\in\pi_{\mr{u,y}}\mathfrak{B}$ is a feasible steady-state trajectory of the plant \eqref{eq:genplant}, and
\begin{equation}\label{eq:DTcontrMat}
\begin{alignedat}{3}
	{A}_{\mr{c},k}  \!&=\! \int_0^1\!A_{\delta\mr{c}}\left(\bar{\rho}_k(\lambda)\right)d\lambda, \hspace{.5em} 
	&&{B}_{\mr{c},k}  \!=\! \int_0^1\! B_{\delta\mr{c}}\left(\bar{\rho}_k(\lambda)\right)d\lambda\\
	{C}_{\mr{c},k}  \!&=\! \int_0^1 \!C_{\delta\mr{c}}\left(\bar{\rho}_k(\lambda)\right)d\lambda, \hspace{.5em} 
		&&{D}_{\mr{c},k}  \!=\! \int_0^1\! D_{\delta\mr{c}}\left(\bar{\rho}_k(\lambda)\right)d\lambda,
\end{alignedat}\hspace{-.2em}
\end{equation}
with $\bar{\rho}_k(\lambda) = \psi\left(\bar{x}_k(\lambda)\right)$ and $\bar{x}_k(\lambda)=x_k^*+\lambda(x_k-x_k^*)$.
\end{thm}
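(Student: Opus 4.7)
The plan is to show that the closed-loop $\m F_l(P,K)$ of the primal plant \eqref{eq:genplant} and the realization \eqref{eq:realizecontroller} is \dlitwo-gain stable with the same bound $\gamma$ that Theorem \ref{thm:diffclpl2} guarantees for the differential closed-loop. The overall strategy is to rewrite the trajectory-to-trajectory deviation dynamics as a linear time-varying system whose per-step matrices are $\lambda$-averages of the differential closed-loop matrices along the straight path $\bar x_k(\lambda) = x_k^*+\lambda(x_k-x_k^*)$, and then transfer the LMI \eqref{eq:matli2} from the differential synthesis to this averaged system by a convex-combination argument.

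I first introduce deviations $\Delta x_k = x_k-x_k^*$, $\Delta w_k$, $\Delta u_k$, $\Delta z_k$, $\Delta y_k$ together with $\Delta x_{\mr{c},k}$ from \eqref{eq:realizecontroller}, and fix the reference controller state consistently with $y_{\mr{c},k}^*=u_k^*$, $u_{\mr{c},k}^*=y_k^*$ so that $(x^*,x_{\mr c}^*,w^*,u^*,z^*,y^*)$ is a closed-loop solution. The fundamental theorem of calculus applied to the $\m C_1$ maps $f$ and $h_{\mr z}$ along $\bar x_k(\lambda)$ gives
\begin{align*}
f(x_k)-f(x_k^*) &= \bar A_k\Delta x_k,\; \bar A_k := \int_0^1 A_\delta(\bar x_k(\lambda))\,d\lambda,\\
h_{\mr z}(x_k)-h_{\mr z}(x_k^*) &= \bar C_{\mr z,k}\Delta x_k,\; \bar C_{\mr z,k} := \int_0^1 C_{\delta\mr z}(\bar x_k(\lambda))\,d\lambda.
\end{align*}
By \eqref{eq:DTcontrMat}, $A_{\mr c,k},\ldots,D_{\mr c,k}$ are the $\lambda$-integrals of $A_{\delta\mr c}(\bar\rho_k(\lambda)),\ldots,D_{\delta\mr c}(\bar\rho_k(\lambda))$ with $\bar\rho_k(\lambda)=\psi(\bar x_k(\lambda))$, and the embedding identities $A_\delta=A\circ\psi$, $C_{\delta\mr z}=C_{\mr z}\circ\psi$ from Definition \ref{def:lpvemb} make every deviation-dynamics block an integral over the common path $\bar\rho_k(\lambda)$. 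Because \eqref{eq:genplant} and \eqref{eq:realizecontroller} are linear in $w$, $u$ and $u_{\mr c}$, eliminating $\Delta u_k,\Delta y_k$ yields $\Delta\chi_{k+1}=\bar{\m A}_k\Delta\chi_k+\bar{\m B}_k\Delta w_k$, $\Delta z_k=\bar{\m C}_k\Delta\chi_k+\bar{\m D}_k\Delta w_k$, where each block of $(\bar{\m A}_k,\bar{\m B}_k,\bar{\m C}_k,\bar{\m D}_k)$ equals $\int_0^1$ of the corresponding block of the differential closed-loop matrices $(\m A_{\mr{cl}},\m B_{\mr{cl}},\m C_{\mr{cl}},\m D_{\mr{cl}})(\bar\rho_k(\lambda))$.

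By the differential synthesis step (Lemma \ref{lem:lpvsyn} and Theorem \ref{thm:diffclpl2}), the LMI \eqref{eq:matli2} of Theorem \ref{thm:li2gain} holds for the differential closed-loop with storage matrix $P$ and bound $\gamma$ at every $\rho\in\m P$. Convexity of $\mathscr X$ forces $\bar x_k(\lambda)\in\mathscr X$ and hence $\bar\rho_k(\lambda)\in\m P$ for all $\lambda\in[0,1]$. Since \eqref{eq:matli2} is affine in the system matrices for fixed $P$, integrating the pointwise LMI against the uniform measure on $[0,1]$ preserves positive definiteness and produces the same inequality with $(\bar{\m A}_k,\bar{\m B}_k,\bar{\m C}_k,\bar{\m D}_k)$ at every $k$. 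A Schur complement applied to this per-step LMI, combined with the storage function $V(\Delta\chi)=\Delta\chi^\top P\Delta\chi$, yields the standard \dltwo-gain dissipation inequality for the LTV deviation system; summing over $k\in\mathbb N$ and using $V\succeq 0$ delivers \eqref{eq:li2gain} with gain $\gamma$ and an initial-state term derived from $V(\Delta\chi_0)$ (with $x_{\mr{c},0}=x_{\mr{c},0}^*$ the controller contribution vanishes).

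The principal obstacle is recognising every block of the closed-loop deviation matrices as a single $\lambda$-integral of the corresponding differential block; this depends on the embedding identities, which align the plant- and controller-side integrands at the common scheduling argument $\bar\rho_k(\lambda)$, and on the linearity of the through-channels in \eqref{eq:genplant} and \eqref{eq:realizecontroller}, which prevents any cross term from escaping the averaging. Once these are in place, the remaining averaging step is Jensen's inequality for matrix-affine maps, and the \dlitwo-gain bound $\gamma$ follows directly.
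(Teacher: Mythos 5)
Your proposal does not prove the statement as written; it proves (a restricted version of) the paper's \emph{next} theorem. Theorem~\ref{thm:primreal} asserts a \emph{realization} property: that \eqref{eq:realizecontroller} is a primal form of $\delta K$, i.e.\ that its variational dynamics along the relevant parameterized family of trajectories coincide with \eqref{eq:difcontroller}. The paper's proof does exactly and only this: it uses the hypothesis that the storage is the constant quadratic form $\delta\chi^\top P\delta\chi$ to identify the geodesic with the straight line $\bar\chi_k(\lambda)=\chi_k^*+\lambda(\chi_k-\chi_k^*)$, integrates the differential controller over $\lambda$ to obtain \eqref{eq:contrlambda}, evaluates at $\lambda=1$ to recover \eqref{eq:realizecontroller}, and differentiates at $\lambda=1$ to recover \eqref{eq:difcontroller}. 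Nowhere in your argument do you verify that the differential form of \eqref{eq:realizecontroller} is \eqref{eq:difcontroller}; you instead jump directly to a closed-loop gain bound, which is the content of the subsequent theorem in the paper. As a proof of Theorem~\ref{thm:primreal} this is a genuine gap, although it is easily repaired: your $\lambda$-parameterized deviation construction, differentiated at $\lambda=1$, is precisely the missing verification.

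Judged on its own terms as a direct proof of the closed-loop \dlitwo bound, your argument is attractive and mostly sound: the block-by-block identity expressing the deviation dynamics as a single $\lambda$-integral of the differential closed-loop matrices is correct precisely because $B_\mr{u}$, $C_\mr{y}$, $D_\mr{zu}$, $D_\mr{yw}$ are constant in \eqref{eq:genplant} (no product of two $\lambda$-dependent factors arises), and averaging the matrix-affine LMI \eqref{eq:matli2} under a \emph{fixed} $P$ is valid --- this is exactly where the parameter-independent storage hypothesis earns its keep. However, note two limitations. First, the bound you obtain compares an arbitrary closed-loop trajectory only with the designated anchor $(x^*,x_\mr{c}^*,w^*,\dots)$: since the realized controller's matrices \eqref{eq:DTcontrMat} depend on both $x_k$ and $x_k^*$, the deviation dynamics between two \emph{arbitrary} closed-loop trajectories are not of the averaged form you derive, so you have not established \eqref{eq:li2gain} for all trajectory pairs as Definition~3 requires --- you have established gain and convergence relative to the steady-state trajectory, which is the practically relevant consequence but not the full incremental claim. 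Second, you need $\bar{x}_k(\lambda)\in\mathscr{X}$ for all $\lambda\in[0,1]$ for $\bar\rho_k(\lambda)\in\m{P}$; this requires convexity of the embedding region $\mathscr{X}$, which the paper also uses implicitly but which you should state as an assumption.
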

\begin{proof}
Define a smoothly parameterized family of trajectories $(\bar x_\mr{c}(\lambda),\bar u_\mr{c}(\lambda),\bar y_\mr{c}(\lambda))$ such that $(\bar x_\mr{c}(1),\bar u_\mr{c}(1),\bar y_\mr{c}(1))=(x_\mr{c},u_\mr{c},y_\mr{c})$ and $(\bar x_\mr{c}(0),\bar u_\mr{c}(0),\bar y_\mr{c}(0))=(x_\mr{c}^*,u_\mr{c}^*,y_\mr{c}^*)$. The differential closed-loop storage function is of the form $\delta \chi^\top P\delta \chi$, corresponding to a constant Riemannian metric, see \cite{Manchester2018}. Hence, the geodesic connecting $\chi_{k}$ and $\chi_{k}^*$ is given by $\bar{\chi}_{k}(\lambda) = \chi_{k}^*+\lambda(\chi_{k}-\chi_{k}^*)$. Consequently, $\frac{\partial}{\partial \lambda}\bar{x}_{k}(\lambda) = x_{k}-x_{k}^*$ and $\frac{\partial}{\partial \lambda}\bar{x}_{\mr{c},k}(\lambda) = x_{\mr{c},k}-x_{\mr{c},k}^*=\Delta x_{\mr{c},k}$. Define $\bar{w}_{k}(\lambda) := w_{k}^*+\lambda(w_{k}-w_{k}^*)$. Due to the linearity of \eqref{eq:difgenplanty} and the definitions of $\bar{x}_{k}(\lambda)$ and $\bar{w}_{k}(\lambda)$ we obtain $\bar y_{k}(\lambda) \!=\!\bar u_{\mr{c},k}(\lambda)\! =\! u_{\mr{c},k}^*+\lambda (u_{\mr{c},k}-u_{\mr{c},k}^*)$. Based on these we can show:\begin{subequations}\label{eq:contrlambda}
	\begin{align}
		\bar x_{\mr{c},k+1}(\lambda) &= x_{\mr{c},k+1}^*+\int_0^\lambda A_{\delta\mr{c}}(\bar\rho_k(\lambda))\Delta x_{\mr{c},k}\,d\lambda+\label{eq:contrxlambda}\\&\hphantom{,=x_{\mr{c},k+1}+}\int_0^\lambda B_{\delta\mr{c}}(\bar\rho_k(\lambda))(u_{\mr{c},k}-u_{\mr{c},k}^*)\,d\lambda,\notag\\
		\bar y_{\mr{c},k}(\lambda) &= y_{\mr{c},k}^*+\int_0^\lambda C_{\delta\mr{c}}(\bar\rho_k(\lambda))\Delta x_{\mr{c},k}\,d\lambda+\label{eq:contrylambda}\\&\hphantom{,=y_{\mr{c},k}^*+}\int_0^\lambda D_{\delta\mr{c}}(\bar\rho_k(\lambda))(u_{\mr{c},k}-u_{\mr{c},k}^*)\,d\lambda.\notag
	\end{align}
	\end{subequations}
	Taking $\lambda = 1$ for \eqref{eq:contrlambda} results in \eqref{eq:realizecontroller}. Furthermore, as $\delta x_{\mr{c},k}=\left.\frac{\partial}{\partial \lambda}\bar{x}_{\mr{c},k}(\lambda)\right\vert_{\lambda=1}$, $\delta y_{\mr{c},k}=\left.\frac{\partial}{\partial \lambda}\bar{y}_{\mr{c},k}(\lambda)\right\vert_{\lambda=1}$ and $\delta u_{\mr{c},k}=\left.\frac{\partial}{\partial \lambda}\bar{u}_{\mr{c},k}(\lambda)\right\vert_{\lambda=1}$, taking the derivative of \eqref{eq:contrlambda} w.r.t. $\lambda$ and taking $\lambda=1$ results in \eqref{eq:difcontroller}. See the proof of \cite[Theorem 22]{Koelewijn2020b} for the full derivation to obtain \eqref{eq:contrlambda}.
\end{proof}\vspace{-.5em}
We will refer to \eqref{eq:realizecontroller} as the incremental LPV controller. Note that the incremental LPV controller consists of a feedback part, to converge towards the steady-state trajectory, and a feedforward part, corresponding to the steady-state trajectory.

\begin{thm}[Closed-loop \dlitwo-gain]
	The closed-loop interconnection of a generalized plant $P$ given by \eqref{eq:genplant} and controller given by \eqref{eq:realizecontroller} is \dlitwo-gain stable and its \dlitwo-gain is bounded by $\gamma$, i.e. satisfies \eqref{eq:li2gain}, if the closed-loop interconnection of the LPV model \eqref{eq:difgenplantlpv} and the LPV controller \eqref{eq:difcontroller} is \dltwo-gain stable and has a bounded \dltwo-gain of $\gamma$ for all $\rho\in\m{P}^\mathbb{N}$ under a (differential) quadratic parameter-independent storage function.
\end{thm}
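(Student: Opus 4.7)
The plan is to chain together three implications already supplied in the paper: Theorem~\ref{thm:diffclpl2}, Theorem~\ref{thm:primreal}, and the differential-to-incremental passage recorded in Theorem~\ref{thm:li2gain} (see also Remark~\ref{rem:diffprimimply}). Throughout, write $\delta P$, $\delta P_\mr{LPV}$ and $\delta K$ for the differential plant \eqref{eq:difgenplant}, its LPV embedding \eqref{eq:difgenplantlpv}, and the differential controller \eqref{eq:difcontroller}, and $K$ for the primal realization \eqref{eq:realizecontroller}.

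First I would apply Theorem~\ref{thm:diffclpl2}. Since by hypothesis $\m{F}_l(\delta P_\mr{LPV},\delta K)$ is $\dltwo$-gain stable with bound $\gamma$ for every $\rho\in\m{P}^\mathbb{N}$, the embedding argument $\rho=\psi(x)$, $\psi(\mathscr{X})\subseteq\m{P}$ immediately transfers this to $\m{F}_l(\delta P,\delta K)$, so the differential closed-loop on the true plant has $\dltwo$-gain bounded by $\gamma$ along every $x\in\pi_\mr{x}\mathfrak{B}$.

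Next I would identify $\m{F}_l(\delta P,\delta K)$ with the differential form of the primal closed-loop $\m{F}_l(P,K)$; this is the technical heart of the argument. By Theorem~\ref{thm:primreal}, \eqref{eq:realizecontroller} was constructed precisely so that along the straight-line family $\bar\chi_k(\lambda)=\chi_k^*+\lambda(\chi_k-\chi_k^*)$, which is the geodesic induced by the constant Riemannian metric associated with the parameter-independent quadratic storage $\delta\chi^\top P\delta\chi$, differentiation in $\lambda$ at $\lambda=1$ reproduces \eqref{eq:difcontroller}. Combined with the fact that the differential form of the primal plant is \eqref{eq:difgenplant} by definition, this shows that the differential form of $\m{F}_l(P,K)$ coincides with $\m{F}_l(\delta P,\delta K)$.

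Having established that the differential form of $\m{F}_l(P,K)$ is $\dltwo$-gain stable with bound $\gamma$ under a positive-definite quadratic storage, the final step is to invoke Theorem~\ref{thm:li2gain} (equivalently Remark~\ref{rem:diffprimimply}) on $\m{F}_l(P,K)$, concluding that the primal closed-loop is $\dlitwo$-gain stable with bound $\gamma$, i.e.\ \eqref{eq:li2gain} holds for arbitrary $w,w^*\in\ell_2$ and initial conditions. The main obstacle I expect is the bookkeeping in the identification step: one must verify that differentiating the path-integral matrices \eqref{eq:DTcontrMat} in $\lambda$ at $\lambda=1$ yields exactly $A_{\delta\mr{c}}(\rho_k),\ldots,D_{\delta\mr{c}}(\rho_k)$ acting on $(\delta x_{\mr{c},k},\delta u_{\mr{c},k})$, without leftover terms from the $\lambda$-dependence of $\bar\rho_k(\lambda)=\psi(\bar x_k(\lambda))$ inside the integrand. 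This is precisely where the constancy of the metric (parameter-independent $P$) is indispensable; a scheduling-dependent storage would curve the geodesic and break the direct correspondence between the primal realization and $\delta K$.
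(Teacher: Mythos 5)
Your proposal is correct and follows essentially the same route as the paper's own proof: chain Theorem~\ref{thm:diffclpl2} (LPV embedding to differential closed loop), Theorem~\ref{thm:primreal} (the primal realization's differential form is exactly $\delta K$, so the differential form of $\m{F}_l(P,K)$ is $\m{F}_l(\delta P,\delta K)$), and then Theorem~\ref{thm:li2gain}/Remark~\ref{rem:diffprimimply} to pass from differential to incremental gain. The only element of the paper's proof you omit is the closing observation that incremental stability plus feasibility of $(x^*,w^*,u^*,z^*,y^*)\in\mathfrak{B}$ forces convergence to that steady-state trajectory, which is supplementary to the gain bound itself.
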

\begin{proof}
	Theorem \ref{thm:diffclpl2} shows that the closed-loop interconnection of differential form of the generalized plant \eqref{eq:difgenplant} and LPV controller \eqref{eq:difcontroller} is \dltwo-gain stable and its \dltwo-gain is bounded by $\gamma$ for all $x\in\pi_\mr{x}\mathfrak{B}$ if the closed-loop interconnection of LPV model \eqref{eq:difgenplantlpv} and LPV controller \eqref{eq:difcontroller} is \dltwo-gain stable with a bounded \dltwo-gain of $\gamma$ for all $\rho\in\m{P}$. Theorem \ref{thm:primreal} shows that for the controller \eqref{eq:controller} its differential form is given by \eqref{eq:difcontroller}. This implies by Theorem \ref{thm:li2gain} that the closed-loop interconnection of the primal form of the plant \eqref{eq:genplant} and realized primal form of the controller is \dlitwo-gain stable and its \dlitwo-gain is bounded $\gamma$. Furthermore, as the closed-loop is incrementally asymptotically stable, meaning all trajectories converge towards each other, and as the steady-state trajectory $(x^*,w^*,u^*,z^*,y^*)\in\mathfrak{B}$ is by design of the controller \eqref{eq:controller} a feasible trajectory, all trajectories $(x,w,u,z,y)\in\mathfrak{B}$ will converge towards $(x^*,w^*,u^*,z^*,y^*)\in\mathfrak{B}$, i.e. $(x,w,u,z,y)\rightarrow(x^*,w^*,u^*,z^*,y^*)$ as $k\rightarrow\infty$, for $w_k\rightarrow w_k^*$ as $k\rightarrow\infty$.
	\end{proof}\vspace{-.5em}

Note that proposed incremental LPV controller explicitly depends on $(u^*,y^*)\in\pi_{\mr{u,y}}\mathfrak{B}$ corresponding to $(x^*,w^*,u^*,z^*,y^*)\in\mathfrak{B}$, hence, explicit knowledge of $w^*$ is required. As $w^*$ in the generalized plant framework can contain besides known disturbances, e.g. references, also unknown disturbances, a disturbance observer is required in order to estimate the unknown entries of $w^*$. Further details on this topic are out of the scope of the current paper and we refer the reader to \cite{Chen2016} for more details on disturbance observers and \cite{Koelewijn2020b} for application to an incremental controller design in the CT case.

\section{Example}\label{sec:example}
In this section we demonstrate the proposed incremental LPV controller synthesis method on a simulation example. For comparison, a standard LPV controller, ensuring \dltwo-gain stability, will also be designed. 

Consider the following DT nonlinear plant
\begin{subequations}\label{eq:1dsys}
	\begin{align}
		x_{1,k+1} &= 0.1x_{1,k}-x_{2,k};\\
		x_{2,k+1} &= 0.9\sin(x_{1,k})+x_{2,k}+u_k;\\
		y_k &= x_{1,k}.
	\end{align}
\end{subequations}
For this plant we want to design a controller which achieves reference tracking. 
The generalized plant structure that is taken in order to achieve this objective is depicted in Fig. \ref{fig:genplant_int}, where $G$ is the plant \eqref{eq:1dsys}, $K$ is the to-be-synthesized controller, $r$ is the reference, $W_\mathrm{e}(q) = \frac{0.2 (q-0.5)}{q+\alpha}$, $M(q) = \frac{q+\alpha}{q-1}$, and $W_\mathrm{u} = 0.2$, where $\alpha = \frac{1}{\pi}$. 

For the synthesis of the controller using the procedure described in Section \ref{sec:synthesis}, we require the differential form of the generalized plant to be embedded in an LPV representation. As the plant \eqref{eq:1dsys} is the only nonlinear system in the generalized plant (the weighting filters are linear), we only require computation of the differential form and the accompanying LPV embedding of \eqref{eq:1dsys} (as the dynamics of differential form of an LTI system are equivalent to its primal form). The following LPV embedding on the region $\mathscr{X}$ of the differential form of \eqref{eq:1dsys} is taken:
\begin{subequations}\label{eq:1dsysdiff}
	\begin{align}
		\delta x_{1,k+1} &= 0.1\delta x_{1,k}-\delta x_{2,k};\\
		\delta x_{2,k+1} &= 0.9\rho_k\delta x_{1,k}+\delta x_{2,k}+\delta u_k;\\
		\delta y_k &= \delta x_{1,k};
	\end{align}
\end{subequations}
where $\rho_k=\cos(x_{1,k})\in[-1,1]$ such that $\psi(x) = \cos(x_{1,k})$ with $x_k\in\mathscr{X}=\mathbb{R}^2$.
Synthesizing a controller for this system using the synthesis procedure described in Section \ref{sec:synthesis}, where the synthesis of the differential controller in Step \ref{itm:step2} of the synthesis method, see Section \ref{sec:diffsyn}, is performed using the method described Lemma \ref{lem:lpvsyn}. This synthesis procedure results in a closed-loop \dlitwo-gain bound of 1.1.

For comparison, a standard LPV controller is also synthesized in order to achieve a bounded closed-loop \dltwo-gain. Whereby the same generalized plant structure as depicted in Fig. \ref{fig:genplant_int} is taken. To perform standard LPV synthesis, the primal form of the plant \eqref{eq:1dsys} is embedded in an LPV representation on the region $\mathscr{X}_\mr{s}$, given by
\begin{subequations}\label{eq:1dsyslpv}
	\begin{align}
		x_{1,k+1} &= 0.1 x_{1,k}- x_{2,k};\\
		x_{2,k+1} &= 0.9\rho_{\mr{s},k} x_{1,k}+ x_{2,k}+ u_k;\\
		y_k &=  x_{1,k}.
	\end{align}
\end{subequations}
where $\rho_{\mr{s},k} = \sinc(x_{1,k})\in[-0.22,1]$ such that $\psi_\mr{s}(x,u) = \sinc(x_{1,k})$ with $x_k\in\mathscr{X}_\mr{s}=\mathbb{R}^2$. For synthesis of the standard LPV controller also the method described in Lemma \ref{lem:lpvsyn} is used (however, applied to \eqref{eq:1dsyslpv}), which results in a closed-loop \dltwo-gain bound of 0.80. 

The closed-loop systems with incremental LPV controller and standard LPV controller are both simulated for a reference $r_k = 1$ and $r_k = 2$. For the incremental LPV controller, this corresponds to the steady-state trajectory $x_{1,k}^*=r_k$, with $u_k^* =y_{\mr{c},k} = 0.9\sin(x_{1,k}^*)$. The trajectories of the closed-loop systems for both of these controllers can be found in the top two graphs in Fig. \ref{fig:ref1}. From the figure, it can be seen that for both references the incremental LPV controller achieves similar tracking behavior and it asymptotically converges towards the reference. However, the output of the plant ends up in a limit cycle around the reference when using the standard LPV controller for the reference $r_k=2$. The closed-loop system with standard LPV controller displays similar issues as have been observed in the CT case \cite{Koelewijn2020}. Furthermore, the incremental LPV controller also allows to track and guarantee convergence towards more complex reference trajectories. In the bottom graph in Fig. \ref{fig:ref1}, the reference $r_k=\sin(\tfrac{\pi}{8}k)+2.5$ is used. For this reference the corresponding feedforward trajectory $u_\mr{k}^*$ (which is not given due to its complexity) is also added to the output of the standard LPV controller to have a fair comparison with the incremental LPV controller, which by design uses this feedforward action. However, it can again be seen that also for this reference, the standard LPV controller is not able to guarantee convergence, even when feedforward is used.

\begin{figure}
	\centering
	\vspace{.5em}
	\includegraphics[scale=.8]{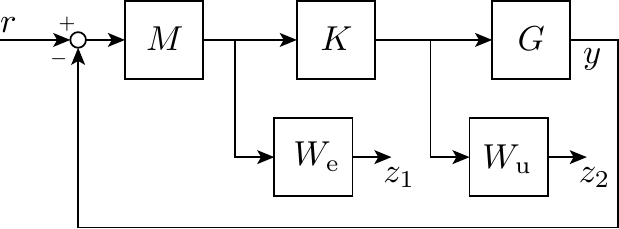}\vspace{-.5em}
	\caption{Generalized plant.}
	\label{fig:genplant_int}\vspace{-1.5em}
\end{figure}

\begin{figure}
	\centering
	\includegraphics{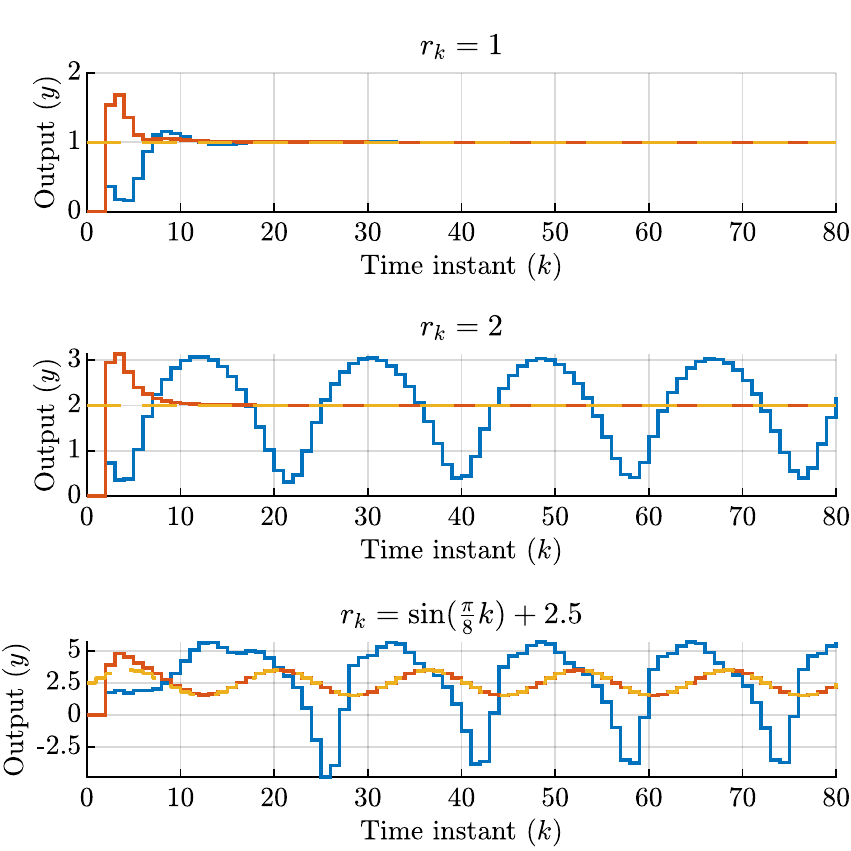}\vspace{-.5em}
	\caption{Output response of the closed-loop of the plant with standard LPV controller (\legendline{mblue}) and the incremental LPV controller (\legendline{morange}) for the reference trajectory (\legendline{myellow,dashed}).}\vspace{-2em}
	\label{fig:ref1}
\end{figure}

\section{Conclusion}\label{sec:conclusion}
In this paper, we have proposed a convex output-feedback controller synthesis method to ensure incremental dissipativity and bounded incremental \dltwo-gain for DT nonlinear systems. This is achieved by using the recent results on convex incremental dissipativity analysis of DT nonlinear systems using the LPV framework and extending the CT incremental LPV controller synthesis results. The proposed synthesis method enables systematic convex controller design for DT nonlinear systems, unlike standard LPV synthesis applied to nonlinear systems, which can have asymptotic stability issues. For future research, we aim at developing model predictive control method on the basis of incremental dissipativity theory. 



\bibliographystyle{ieeetr}
\bibliography{references.bib}

\end{document}